\documentclass[aps,prb,superscriptaddress,amsmath,amssymb,reprint,floatfix]{revtex4-2}
\usepackage[colorlinks=true, urlcolor=blue, linkcolor=blue, citecolor=blue]{hyperref}
\usepackage[dvipsnames]{xcolor}
\usepackage[utf8]{inputenc}
\usepackage{graphicx}
\usepackage{bm}
\usepackage{booktabs}
\usepackage{natbib}
\usepackage{algorithm}
\usepackage{algorithmic}
\usepackage{amsthm}

\graphicspath{{figures/}}

\newtheorem{theorem}{Theorem}
\theoremstyle{definition}
\newtheorem{definition}{Definition}

\newcommand{\s}{\bm{s}}

\begin{document}

\title{Exact autoregressive sampling of planar Ising spin glasses via the Kac--Ward theory}

\author{Jing Liu}
\email{jing.liu@bupt.edu.cn}
\affiliation{School of Physical Science and Technology, Beijing University of Posts and Telecommunications, Beijing 100876, China}
\affiliation{Institute of Theoretical Physics, Chinese Academy of Sciences, Beijing 100190, China}

\author{Tao Chen}
\affiliation{Hefei National Laboratory for Physical Sciences at the Microscale and Department of Modern Physics, University of Science and Technology of China, Hefei 230026, China}
\affiliation{Hefei National Laboratory, University of Science and Technology of China, Hefei 230088, China}

\author{Tianrui Che}
\affiliation{School of Advanced Interdisciplinary Sciences, University of Chinese Academy of Sciences}

\author{Lei Wang}
\affiliation{Beijing National Laboratory for Condensed Matter Physics and Institute of Physics, Chinese Academy of Sciences, Beijing 100190, China}

\author{Youjin Deng}
\affiliation{Hefei National Laboratory for Physical Sciences at the Microscale and Department of Modern Physics, University of Science and Technology of China, Hefei 230026, China}
\affiliation{Hefei National Laboratory, University of Science and Technology of China, Hefei 230088, China}

\author{Pan Zhang}
\email{panzhang@itp.ac.cn}
\affiliation{Institute of Theoretical Physics, Chinese Academy of Sciences, Beijing 100190, China}
\affiliation{School of Fundamental Physics and Mathematical Sciences, Hangzhou Institute for Advanced Study, UCAS, Hangzhou 310024, China}

\begin{abstract}
Exact sampling from the Boltzmann distribution of spin glasses remains an outstanding challenge: Markov chain Monte Carlo methods suffer from critical slowing down and metastable trapping, while modern neural autoregressive samplers such as variational autoregressive networks are approximate and, in the absence of exact reference samples, cannot be rigorously benchmarked.
Here we present an exact autoregressive sampling algorithm for planar Ising spin glasses based on the Kac--Ward theory.
Under the chain-rule factorization, sequentially fixing spins induces boundary-localized external fields, which destroy the zero-field structure required for exact evaluation.
By encoding these fields with a planarity-preserving auxiliary spin construction, the conditional partition functions are mapped to an extended zero-field Ising model and exactly evaluated using the Kac--Ward determinant formula.
The method generates strictly independent and identically distributed samples with exact normalized likelihoods at a computational cost of $\mathcal{O}(N^{5/2})$ for $N$ spins, thereby providing an exact baseline for benchmarking neural autoregressive samplers.
\end{abstract}

\maketitle

\section{Introduction}\label{sec:introduction}
The efficient generation of equilibrium configurations from the Boltzmann distribution is a fundamental problem in statistical physics.
For decades, Markov chain Monte Carlo (MCMC) methods~\cite{newman1999monte} such as the Metropolis--Hastings algorithm~\cite{metropolis1953equation} have served as the standard computational approach.
However, standard local MCMC approaches face severe limitations when applied to systems exhibiting phase transitions or complex energy landscapes.
Near the critical region, the divergence of the correlation length leads to critical slowing down, where the integrated autocorrelation time diverges as a power law in the system size.
Furthermore, in disordered systems such as spin glasses~\cite{sfedwards1975theory} or geometrically frustrated antiferromagnets, the energy landscape is partitioned by high barriers.
Consequently, local update algorithms often become trapped in metastable states, failing to traverse the configuration space ergodically.
While advanced cluster algorithms (e.g., Swendsen--Wang~\cite{swendsen1987nonuniversal}, Wolff~\cite{wolff1989collective}) successfully mitigate critical slowing down in ferromagnetic models, they generally lose effectiveness in the presence of frustration.
Currently, parallel tempering (PT)~\cite{hukushima1996exchange,e.marinari1992simulated} still stands as the state-of-the-art method for simulating glassy systems.
However, it requires the laborious tuning of temperature schedules and remains computationally expensive.
Thus, developing methods that can accurately sample equilibrium configurations for spin glasses remains an active area of research.

Recently, variational autoregressive networks (VANs)~\cite{wu2019solving,bialas2022hierarchical,liu2025efficient,li2025deep,delbono2026demonstrating} and tensor-network-based methods~\cite{frias-perez2023collective,chen2025tensor,chen2025tnmcmc} have emerged as promising alternatives for approximately sampling from the Boltzmann distribution.
Unlike MCMC, these methods generate independent and identically distributed (i.i.d.) samples free of autocorrelation.
When combined with importance sampling or Metropolis--Hastings acceptance steps~\cite{nicoli2020asymptotically,mcnaughton2020boosting}, the resulting samples are asymptotically unbiased.
However, the practical efficiency of such steps degrades when the variational ansatz is inaccurate, a regime that is difficult to detect without exact reference samples.
In addition to evaluating these variational methods, the reliable estimation of key spin-glass observables---such as the overlap distribution, spin-glass susceptibility, and the Binder cumulant---requires evaluating multi-replica correlations and higher-order moments, which critically depends on obtaining well-equilibrated and uncorrelated samples from the Boltzmann distribution.

For the planar Ising model, the graph-theoretic approach to the exact computation of the partition function was pioneered by the seminal work of Kac and Ward~\cite{kac1952combinatorial}.
Alternatively, Kasteleyn and Fisher independently formulated exact solutions by mapping the zero-field Ising model to a perfect matching problem on decorated lattices, where the partition function is evaluated via the Pfaffian of an antisymmetric matrix~\cite{kasteleyn1963dimer,fisher1966dimer}.
Building upon this Pfaffian approach, an exact sampling algorithm was elegantly achieved in Ref.~\cite{thomas2009exact} that maps the two-dimensional (2D) spin glass to perfect matching on a Fisher graph.
By employing Wilson's algorithm~\cite{wilson1997determinant} alongside a nested dissection strategy, their method recursively samples spins along macroscopic geometric separators.
However, this divide-and-conquer approach is intrinsically hierarchical rather than sequential: it does not explicitly provide the spin-by-spin conditional probabilities $P(s_i \vert \s_{<i})$, which are strictly required to evaluate the chain-rule factorization under a canonical autoregressive ordering.

The theoretical possibility of constructing such an exact autoregressive sampler, i.e., by computing the partition functions conditioned on previously assigned spins, has been noted in several works~\cite{thomas2009exact, krauth2006statistical}.
Yet, its realization has remained elusive.
The fundamental bottleneck is that autoregressively fixing a subset of spins naturally induces effective fields on the boundary of the unsampled region.
Because the topological mappings underlying both the Kac--Ward and Kasteleyn--Pfaffian methods are strictly valid only in the absence of external fields, these induced boundary fields destroy the exact loop formulations, thus precluding the exact evaluation of autoregressive conditional probabilities.
As a result, evaluating the fidelity of VANs still relies heavily on computationally expensive PT for baseline comparisons~\cite{delbono2025nearestneighbors,delbono2026demonstrating}.

To bridge this gap, we present an exact autoregressive sampling algorithm for planar Ising models.
The method decomposes the Boltzmann distribution into exact conditional probabilities via the chain rule~\cite{biazzo2023autoregressive,biazzo2024sparse} and evaluates them by applying the Kac--Ward determinant formula~\cite{kac1952combinatorial} to an extended planar graph.
Crucially, we adopt a planarity-preserving auxiliary spin construction that was recently introduced for the exact maximum likelihood decoding of quantum error correction codes~\cite{cao2025exact,feng2025planar} to encode the effective boundary fields induced during the sequential generation (see Fig.~\ref{fig:illustration}).
The algorithm generates strictly i.i.d. samples with a computational complexity of $\mathcal{O}(N^{5/2})$ for a system of $N$ spins.
This provides the exact normalized likelihoods necessary to rigorously benchmark modern neural-network and tensor-network methods.
A reference Python implementation accompanies this work~\cite{pykacward}.
This paper is organized as follows.
In Sec.~\ref{sec:methods} we describe the autoregressive factorization, the auxiliary spin construction, and an efficient scheme for evaluating the determinant ratio via the matrix determinant lemma.
In Sec.~\ref{sec:results} we benchmark the algorithm on representative planar Ising systems.
In Sec.~\ref{sec:discussion} we conclude with a discussion.

\section{Methods}\label{sec:methods}
\subsection{Autoregressive factorization of the Boltzmann distribution}\label{subsec:autoregressive}
Without loss of generality, we consider an Ising model defined on an open square lattice of size $N = L \times L$.
The spin configuration is denoted by $\s = \{s_1,\cdots,s_N\}$, where each spin takes values $s_i \in \{-1, +1\}$.
With the nearest-neighbor interactions $J_{ij}$ and inverse temperature $\beta = 1/T$, the Hamiltonian is given by $E(\s) = -\sum_{\langle i,j \rangle} J_{ij} s_i s_j$, where $\langle i,j \rangle$ denotes the set of nearest-neighbor pairs.
The associated equilibrium Boltzmann distribution reads
\begin{equation}
    P(\s) = \frac{e^{-\beta E(\s)}}{Z},
\end{equation}
with the partition function $Z = \sum_{\s} e^{-\beta E(\s)}$.
The chain rule factorizes the Boltzmann distribution into an exact product of conditional probabilities,
\begin{equation}
    P(\s) = \prod_{i=1}^N P_i(s_i \vert \s_{<i}),
\end{equation}
where $\s_{<i} = \{s_1,\cdots,s_{i-1}\}$ and $\s_{>i} = \{s_{i+1}, \cdots, s_N\}$ denote the sets of spins preceding and succeeding index $i$ under a fixed autoregressive ordering.
Once all the conditional probabilities are available, one can generate i.i.d. configurations from $P(\s)$ by ancestral sampling~\cite{bishop2006pattern}, i.e., by sequentially drawing each spin from its conditional probability.
Following the formulation derived in Refs.~\cite{biazzo2023autoregressive,biazzo2024sparse}, the $i$-th conditional probability is explicitly given by
\begin{equation}\label{eq:conditional-1}
P_i(s_i \vert \s_{<i}) = \frac{\sum_{\s_{>i}}e^{-\beta E(\s)}}{\sum_{\s_{>i-1}}e^{-\beta E(\s)}} = \frac{f(s_i,\s_{<i})}{\sum_{s_i}f(s_i,\s_{<i})},
\end{equation}
where $f(s_i,\s_{<i}) = \sum_{\s_{>i}}e^{-\beta E(\s)}$ represents the conditional partition function obtained by fixing $\s_{<i}$ and $s_i$ while summing over the remaining spins.
By substituting the specific form of the Ising Hamiltonian into Eq.~\eqref{eq:conditional-1}, one obtains the exact expression for the autoregressive conditional probability in terms of the sigmoid function $\sigma(x) = (1 + e^{-x})^{-1}$:
\begin{equation}\label{eq:twobo}
P_i(s_i=+1\vert\s_{<i}) = \sigma\left( 2 \beta \sum_{j,j<i}J_{ji}s_j + \log \frac{Q^+_i}{Q^-_i}\right).
\end{equation}
Here the first term corresponds to the direct interactions between spin $i$ and the previously sampled spins, while the second term captures the influence of the unsampled spins through
\begin{multline}
Q^{\pm}_i = \sum_{\s_{>i}} \exp \beta \bigg[\sum_{\langle m,n\rangle,m,n>i} J_{mn} s_m s_n \\ + \sum_{m, m>i} s_m \big(\pm J_{im} + \sum_{j, j<i} J_{jm} s_j\big)\bigg].
\end{multline}

The physical interpretation of Eq.~\eqref{eq:twobo} becomes transparent by introducing~\cite{biazzo2023autoregressive,biazzo2024sparse} 
\begin{equation}
    \xi_{im} = \sum_{j<i} J_{jm} s_j
\end{equation}
for $m > i$, which encode the influence of the frozen spins $\s_{<i}$ on the remaining unsampled site $m$ when sampling the $i$-th spin.
Due to the sparse and local nature of the Ising interactions, most $\xi_{im}$ vanish identically, and each remaining nonzero $\xi_{im}$ involves only a fixed number of spins, see the dashed red lines in Fig.~\ref{fig:illustration}(a).
Although $Q_i^{\pm}$ formally involve a summation over all configurations of the unsampled spins, they depend on $\s_{<i}$ only through these nonzero $\{\xi_{im}\}_{m>i}$.
This observation underlies the TwoBo autoregressive architecture~\cite{biazzo2024sparse}, where $Q_i^{\pm}$ are approximated by a neural network, as well as the physics-inspired sparse attention design in Ref.~\cite{zhong2026scalable}.
In addition, by exploiting the tensor-network formulation of the Ising model, $Q_i^{\pm}$ can be evaluated efficiently through approximate tensor-network contraction, as done in the tensor-network Monte Carlo methods of Refs.~\cite{frias-perez2023collective,chen2025tensor}.
However, all of these approaches evaluate $Q_i^{\pm}$ only approximately.
In what follows, we show how to evaluate $Q_i^{\pm}$ exactly.

\begin{figure*}[!ht]
    \centering
    \includegraphics[width=0.9\linewidth]{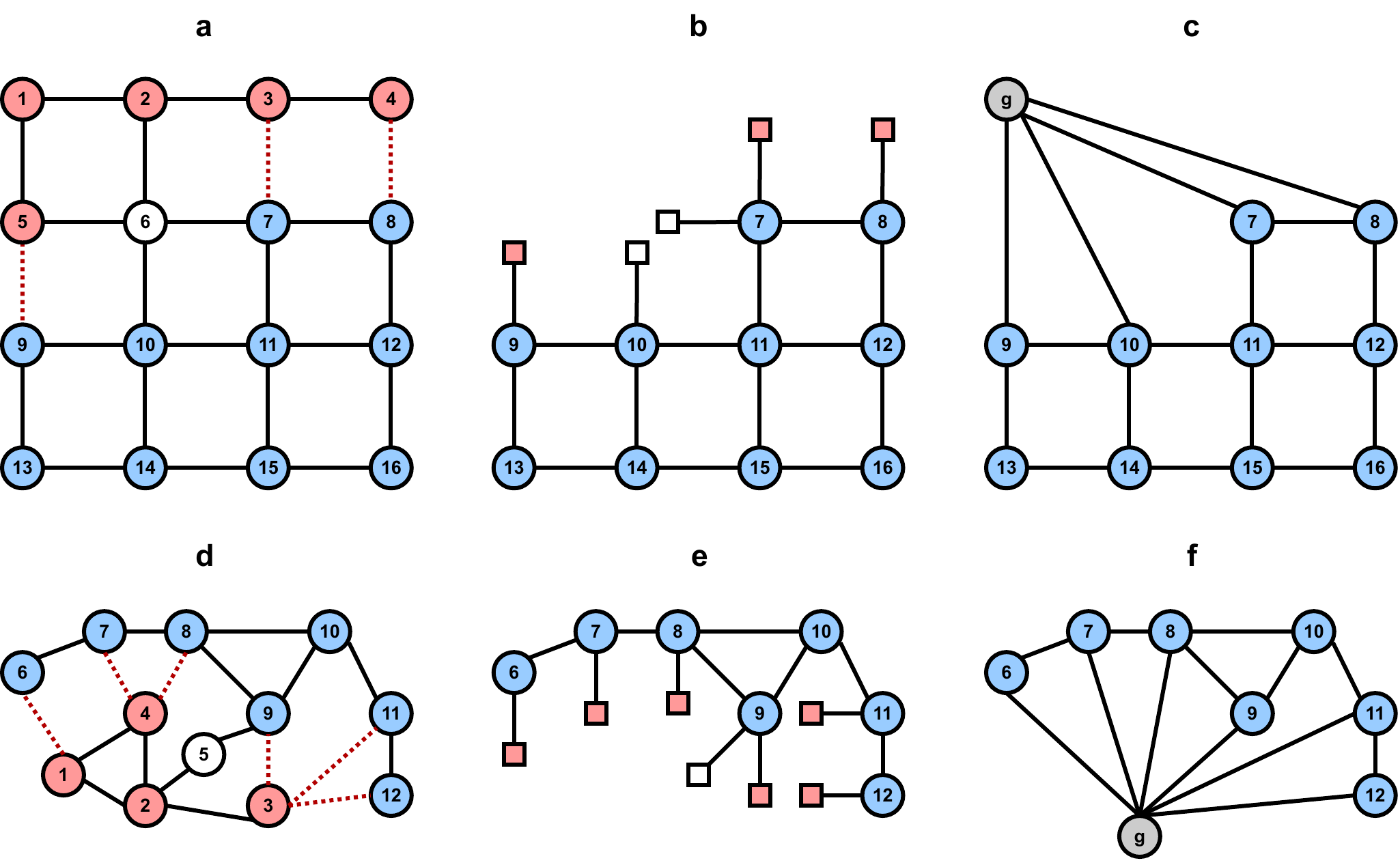}
    \caption{Schematic illustration of the exact autoregressive sampling algorithm via the Kac--Ward formula.
    (a) Snapshot of the sampling process on a $4 \times 4$ square lattice using a raster-scan order. The spins are partitioned into three sets: the frozen spins $\s_{<i}$ (red), the current spin $s_i$ being sampled (white), and the unsampled spins $\s_{>i}$ (blue). The interactions between the frozen spins and the unsampled spins induce effective local fields on the boundary sites (dashed red lines).
    (b) The subgraph described by Eq.~\eqref{eq:effective-ham}.
    (c) The auxiliary spin construction. To evaluate the partition functions $Q_i^{\pm}$ in (b) using the Kac--Ward formula, the effective local fields are replaced by pairwise couplings to an auxiliary spin $s_g$ (gray). The auxiliary spin is positioned in the region corresponding to the sampled spins, connecting to the boundary sites without introducing edge crossings, thus preserving the planarity of the extended interaction graph.
    (d)--(f) The same process for a general planar graph.}
    \label{fig:illustration}
\end{figure*}

\subsection{Kac--Ward formula for the planar Ising model}\label{subsec:kacward}
To evaluate $Q_i^{\pm}$ exactly, we exploit the planarity of the subgraph using the Kac--Ward determinant formula~\cite{kac1952combinatorial,johnson2016learning}, which we now recall.
Consider a planar graph $G(V,E)$ embedded in the plane with straight-line edges.
Let $\phi_{ijk} \in [-\pi,+\pi]$ denote the clockwise rotation between the directed edges $(i,j)$ and $(j,k)$.
One defines two matrices $B, D \in \mathbb{C}^{2|E|\times 2|E|}$ by
\begin{align}
B_{ij,jk,k\neq i} &= \exp(\sqrt{-1} \phi_{ijk}/2), \\
    D_{ij,ij} &= \tanh(K_{ij}),
\end{align}
where $K_{ij} = \beta J_{ij}$ is the dimensionless coupling, $B$ is the so-called non-backtracking matrix, and $D$ is diagonal.
The partition function of the zero-field Ising model on $G$ is then given by the Kac--Ward determinant formula:
\begin{multline}\label{eq:kacward}
    \log Z = N \log 2 + \sum_{\langle i,j\rangle}\log\cosh K_{ij} \\ + \frac{1}{2} \log \det(I - BD).
\end{multline}

\subsection{Evaluation of the conditional probability}\label{subsec:eval-cond}
The quantities $Q_i^{\pm}$ correspond to partition functions of the effective Ising model defined on the subgraph, with Hamiltonian
\begin{equation}\label{eq:effective-ham}
E_{\text{eff}}^{\pm}(\s_{>i}) = -\sum_{\langle m,n \rangle} J_{mn} s_m s_n - \sum_{m} h_m^{\pm} s_m,
\end{equation}
where the effective local fields are given by $h_m^{\pm} = \pm J_{im} + \sum_{j < i} J_{jm} s_j = \pm J_{im} + \xi_{im}$.
In general, the presence of arbitrary external fields renders the exact evaluation of the partition function NP-hard for planar Ising models~\cite{fbarahona1982computational}, as the fields destroy the mapping to non-intersecting loop configurations underlying the Kac--Ward construction~\cite{kac1952combinatorial}.

However, the present setting admits an exact solution due to the specific planar geometry.
For example, on a nearest-neighbor square lattice under the raster-scan ordering, the effective local fields $h_m^{\pm}$ are nonzero only on sites $m$ that are adjacent to the previously sampled region, as we have discussed above.
These sites form a one-dimensional boundary separating the frozen spins $\s_{< i}$ from the remaining subgraph $\s_{>i}$.
As a result, the external fields act exclusively on boundary spins and do not interact with the bulk of the unsampled region.
This observation allows us to recover a purely quadratic Hamiltonian while preserving planarity by introducing an auxiliary spin $s_g$~\cite{cao2025exact,feng2025planar}.
Specifically, we replace the effective local fields by pairwise couplings between $s_g$ and the boundary spins, with coupling strengths $J_{gm}^{\pm} = h_m^{\pm}$, see Fig.~\ref{fig:illustration}(b,c).
The resulting extended Hamiltonian reads
\begin{equation}\label{eq:extended-ham}
E_{\text{eff}}^{\pm}(\s_{>i}, s_g) = -\sum_{\langle m,n \rangle} J_{mn} s_m s_n - \sum_{m} J_{gm}^{\pm} s_g s_m,
\end{equation}
which contains no external fields and is therefore amenable to exact treatment.
Let $\widehat{Q}^{\pm}_i$ be the partition function of this enlarged zero-field Ising model, including the auxiliary spin in the summation,
\begin{equation}
    \widehat{Q}^{\pm}_i = \sum_{s_g} \sum_{\s_{>i}} e^{-\beta E_{\text{eff}}^{\pm}(\s_{>i}, s_g)}.
\end{equation}
Since the Hamiltonian is invariant under the global flip $s_g \to -s_g$ accompanied by $s_m \to -s_m$ for all $m>i$, it follows immediately that $\widehat{Q}^{\pm}_i = 2 Q^{\pm}_i$.
The overall factor of two cancels in the logarithmic ratio entering Eq.~\eqref{eq:twobo}, and thus does not affect the conditional probabilities.

To apply the Kac--Ward formula, the interaction graph of the extended system must remain planar.
This condition is strictly satisfied provided that the autoregressive generation follows a connected search ordering, where the frozen spins always form a connected subgraph.
Under this condition, the auxiliary spin construction is topologically equivalent to progressively contracting the connected frozen region into a single vertex.
Since planar graphs are closed under edge contraction, the extended interaction graph is rigorously guaranteed to remain planar without introducing edge crossings, as illustrated in Fig.~\ref{fig:illustration}(d)--(f).
We provide a rigorous proof of this planarity preservation in Appendix~\ref{appendix:planarity}.
Consequently, the auxiliary spin construction provides an exact mapping from the original Ising model with boundary-localized external fields to a zero-field Ising model on an enlarged planar graph.
This mapping enables the exact evaluation of $Q_i^{\pm}$ using the Kac--Ward formula, and hence of the conditional probabilities $P_i(s_i \vert \s_{<i})$ at each autoregressive step.
We provide pseudocode describing the autoregressive sampling process for the 2D Ising spin glass in Algorithm~\ref{alg:sampling} as a concrete example.

\begin{algorithm}[H]
\caption{Exact sampling of 2D Ising spin glass}
\label{alg:sampling}
\begin{algorithmic}[1]
\REQUIRE Couplings $\{J_{ij}\}$ on an $N = L \times L$ open lattice, inverse temperature $\beta=1/T$
\ENSURE Exact sample $\s = \{s_1,\ldots,s_N\} \sim P(\s) = e^{-\beta E(\s)}/Z$
\STATE Initialize empty configuration $\s \gets \varnothing$
\STATE Label each spin in a raster-scan order, see Fig.~\ref{fig:illustration}(a)
\FOR{$i = 1$ to $N$}
    \STATE $H_i \gets 2 \beta \sum_{j<i} J_{ji} s_j$
    \STATE $h_m^{\pm} \gets \pm J_{im} + \sum_{j<i} J_{jm} s_j$ for all $m>i$
    \STATE Construct extended planar graph $G^{\pm}$ by introducing the auxiliary spin $s_g$, and setting $J_{gm}^{\pm} = h_m^{\pm}$ for the boundary sites $m$
    \STATE $\log \widehat{Q}_i^{\pm} \gets \text{Kac--Ward formula}(G^{\pm}, \beta)$
    \STATE $P_i(s_i=+1 \vert \s_{<i}) \gets \sigma\left(H_i + \log \widehat{Q}_i^{+} - \log \widehat{Q}_i^{-}\right)$
    \STATE Sample $s_i \in \{-1,+1\}$ from $P_i(\cdot \vert \s_{<i})$
    \STATE Append $s_i$ to configuration $\s$
\ENDFOR
\end{algorithmic}
\end{algorithm}

\subsection{Efficient evaluation of the determinant ratio}\label{subsec:det-ratio}
At each ancestral sampling step, Eq.~\eqref{eq:twobo} requires evaluating the log-ratio $\log(\widehat{Q}^+_i/\widehat{Q}^-_i)$.
Since the extended graph is identical for $\widehat{Q}_i^{+}$ and $\widehat{Q}_i^{-}$, the two cases differ only in the auxiliary spin couplings.
Denoting the corresponding diagonal matrices by $D^{+}$ and $D^{-}$, Eq.~\eqref{eq:kacward} gives
\begin{equation}\label{eq:logratio}
    \log\frac{\widehat{Q}^+_i}{\widehat{Q}^-_i} = \sum_{b} \log \frac{\cosh K_{gb}^+}{\cosh K_{gb}^-} + \frac{1}{2} \log \frac{\det(I-BD^{+})}{\det(I-BD^{-})},
\end{equation}
where the summation runs over the boundary sites coupled to the auxiliary spin.
The key observation is that $h_m^+ = h_m^-$ except on the auxiliary edges adjacent to the $i$-th spin itself, so $D^+-D^-$ is nonzero only there.
By setting $A = I - BD^{-}$, we may write $I - BD^{+} = A + UV^\top$, where $U,V\in\mathbb{C}^{2|E|\times r}$ with $r=\mathcal{O}(1)$ (twice the number of unsampled neighbors of the $i$-th spin).
The matrix determinant lemma then yields
\begin{equation}\label{eq:det_ratio}
    \frac{\det(I-BD^{+})}{\det(I-BD^{-})}
= \det(I_r+V^\top A^{-1}U).
\end{equation}
The full determinant ratio thus collapses to an $r\times r$ determinant, requiring only a single LU factorization of $A$ per autoregressive step.
A detailed computational complexity analysis is given in Appendix~\ref{appendix:complexity}.

\section{Results}\label{sec:results}
We first demonstrate the sampling efficiency of the algorithm on the 2D Edwards--Anderson (EA) spin glass~\cite{sfedwards1975theory} with Hamiltonian $E(\s) = -\sum_{\langle i,j \rangle} J_{ij} s_i s_j$, where the couplings $J_{ij}$ are drawn from a random distribution (e.g., bimodal or Gaussian).
For disordered systems, local MCMC methods suffer from severe dynamic freezing, leading to extremely long relaxation times for $\beta \gtrsim 1.5$~\cite{mcnaughton2020boosting,ciarella2023machinelearningassisted}.
We quantify this behavior by computing the normalized energy autocorrelation function
\begin{equation}
\Gamma(t) = \frac{\langle E(t') E(t'+t) \rangle - \langle E \rangle^2}{\langle E^2 \rangle - \langle E \rangle^2}.
\end{equation}
As illustrated in Fig.~\ref{fig:acf}, the Metropolis algorithm exhibits a slow, stretched exponential decay in $\Gamma(t)$, reflecting a large integrated autocorrelation time~\cite{mcnaughton2020boosting}.
In contrast, the exact autoregressive sampler constructs each configuration independently from the exact conditional distribution, requiring no equilibration and no reference to a prior state.
The resulting autocorrelation function satisfies $\Gamma(t) = \delta_{t,0}$, confirming that the generated configurations are strictly i.i.d.
This property is particularly advantageous for glassy systems where even PT requires careful tuning of temperature schedules and many MC steps to yield uncorrelated samples.

\begin{figure}[!ht]
    \centering
    \includegraphics[width=\linewidth]{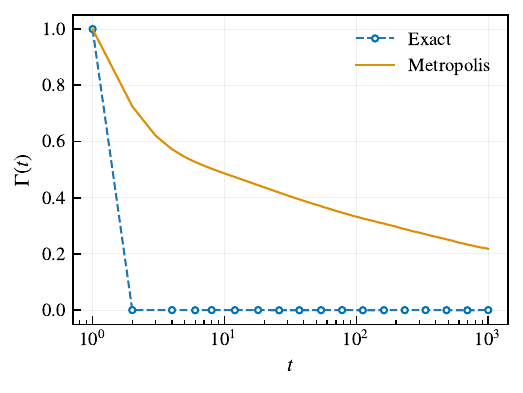}
    \caption{Autocorrelation function $\Gamma(t)$ of the energy for the 2D EA spin glass.
    The simulation was performed on a $16 \times 16$ lattice at temperature $T=0.5$ ($\beta=2$).
    The Metropolis algorithm (orange) shows a stretched exponential decay, while the exact autoregressive sampler (blue) yields $\Gamma(t) = \delta_{t,0}$, confirming strictly i.i.d. samples.
    The disorder realizations are taken from Ref.~\cite{delbono2025nearestneighbors}. Data are averaged over 100 independent runs.}
    \label{fig:acf}
\end{figure}

Since the algorithm supports arbitrary planar graphs, we next validate it on the triangular ferromagnetic (FM) Ising model~\cite{houtappel1950orderdisorder}, defined by $E(\s) = -J \sum_{\langle i,j \rangle} s_i s_j$ with $J > 0$, where the summation runs over all nearest-neighbor pairs on the triangular grid.
This model exhibits a well-known ferromagnetic phase transition at the critical temperature $T_c = 4J/\ln 3$.
Thus, simulating it near criticality is a computational bottleneck for local MCMC methods due to the critical slowing down.
Figure~\ref{fig:triangular} shows the magnetization per spin $|m|$ and the specific heat per spin $c_v$ as functions of temperature for system sizes $L = 8$ to $32$, together with the exact specific heat curve for $L=64$, obtained by directly differentiating Eq.~\eqref{eq:kacward} with respect to temperature.
The specific heat peak sharpens and shifts toward $T_c$ with increasing $L$, consistent with the expected finite-size scaling behavior.
Because our algorithm produces unbiased samples from the exact Boltzmann distribution, these observables are free from the systematic biases that can affect VANs near criticality.

\begin{figure}[!ht]
    \centering
    \includegraphics[width=\linewidth]{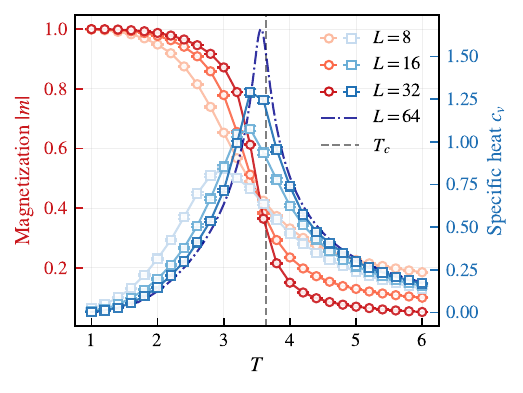}
    \caption{Validation on the triangular FM Ising model.
    The magnetization per spin $|m|$ (red, left axis) and the specific heat per spin $c_v$ (blue, right axis) are plotted as a function of temperature $T$.
    The values are estimated on 10\,000 i.i.d. samples.
    The specific heat curve for $L=64$ is obtained by directly differentiating $\ln Z$ with respect to temperature.
    The vertical dotted line indicates the theoretical critical temperature $T_c = 4J/\ln 3 \approx 3.641$.}
    \label{fig:triangular}
\end{figure}

Finally, we demonstrate the exact-likelihood capability of the algorithm on a problem where it provides a distinct computational advantage: entropy estimation for the triangular antiferromagnetic (AFM) Ising model~\cite{wannier1950antiferromagnetism}.
The geometric frustration of the triangular lattice with $J < 0$ suppresses magnetic ordering and produces an extensively degenerate ground state with residual entropy $S_0 \approx 0.323$ per site~\cite{wannier1950antiferromagnetism}.
Estimating the entropy $S$ requires indirect methods such as thermodynamic integration or Wang--Landau sampling.
Because the algorithm computes the exact normalized probability $P(\s)$ for each generated configuration, the entropy is obtained directly as an unbiased expectation value over $N_s$ samples:
\begin{equation}
S = -\langle \ln P(\s) \rangle \approx -\frac{1}{N_{s}}\sum_{i=1}^{N_s} \ln P(\s^i).
\end{equation}
Figure~\ref{fig:entropy} presents the temperature dependence of the entropy per site for the triangular AFM Ising model with system sizes $L=8$ to $32$.
The horizontal dashed line marks the theoretical zero-temperature residual entropy ($S_{0} \approx 0.323$).

\begin{figure}[!ht]
    \centering
    \includegraphics[width=\linewidth]{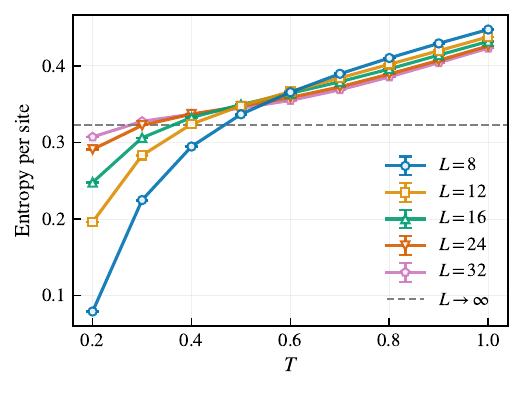}
    \caption{Direct entropy estimation via exact likelihoods.
    Temperature dependence of the entropy per site for the triangular AFM Ising model with system sizes $L=8$ to $32$. The horizontal dashed line marks the theoretical zero-temperature residual entropy ($S_{0} \approx 0.323$).}
    \label{fig:entropy}
\end{figure}

\section{Discussion}\label{sec:discussion}
In this work, we have presented an exact autoregressive sampling algorithm for arbitrary planar Ising spin glasses.
The method combines the Kac--Ward formula with the auxiliary spin construction, generating strictly unbiased Boltzmann-distributed configurations for any planar coupling arrangement.
The principal computational advantage over conventional Monte Carlo methods is the complete elimination of equilibration and autocorrelation overhead.
Beyond i.i.d. sampling, the method provides direct access to the exact normalized likelihood and conditional probabilities for every generated configuration.
This capability enables unbiased entropy estimation without thermodynamic integration, and the exact likelihoods and unbiased observables provide a rigorous baseline for validating neural autoregressive samplers on the same planar instances used in variational benchmarks~\cite{delbono2025nearestneighbors,delbono2026demonstrating,zhong2026scalable}.

A current limitation of the method is the $\mathcal{O}(N^{5/2})$ computational cost per sample, compared with the $\mathcal{O}(N^{3/2})$ scaling of Ref.~\cite{thomas2009exact}, thus restricting applications to systems of moderate size ($N \approx 10^3$ spins).
This scaling originates from the sequential evaluation of the exact conditional probabilities: each of the $N$ ancestral steps requires a fresh sparse LU factorization of the Kac--Ward matrix, whose computational cost sums to $\mathcal{O}(N^{5/2})$ over a full sweep (see Appendix~\ref{appendix:complexity}).
The additional cost relative to Ref.~\cite{thomas2009exact} is the price paid for direct access to every per-spin conditional probability, which their divide-and-conquer construction does not explicitly provide.

Looking forward, several directions can extend the utility of the approach.
GPU-accelerated sparse linear algebra packages have the potential to improve the computational efficiency and extend the range of accessible system sizes.
The exact samples also enable high-precision finite-size scaling studies at moderate sizes, where the absence of autocorrelation may compensate for the limited system size.
Beyond benchmarking, the exact conditional probabilities $P_i(s_i \vert \s_{<i})$ open a concrete route to improving these samplers.
Optimizing VANs via the REINFORCE algorithm~\cite{williams1992simple} to minimize the variational free energy suffers from high gradient variance, as the gradient estimator relies on a sparse, global reward signal.
In contrast, our algorithm deterministically computes the conditional probabilities at every generation step, which can serve as per-step teacher signals for on-policy knowledge distillation training~\cite{ross2011reduction,agarwal2024onpolicy,gu2024minillm,lightman2024lets}.

An open-source Python implementation of the algorithm described in this work is available at Ref.~\cite{pykacward}.

\section*{Acknowledgements}
We thank Ying Tang for valuable discussions.
This work is supported by Projects 12405047, 12325501, 12247104, 12447101, 12275263 of the National Natural Science Foundation of China (NSFC), the Strategic Priority Research Program of Chinese Academy of Sciences (under Grant No.XDB1680000), the Quantum Science and Technology-National Science and Technology Major Project (under Grant No.2021ZD0301900), and the Fundamental Research Funds for the Central Universities.

\appendix

\setcounter{figure}{0}
\renewcommand{\thefigure}{S\arabic{figure}}

\section{Proof of planarity preservation in exact autoregressive sampling}\label{appendix:planarity}

In this section, we rigorously prove that the auxiliary spin construction introduced in Sec.~\ref{subsec:eval-cond} preserves the planarity of the interaction graph at every step of the autoregressive sampling process, provided that a valid sampling ordering is employed.
We first formalize the autoregressive sampling step in the language of graph theory.
Let the original Ising model be defined on an undirected, connected planar graph $G = (V, E)$, where $V = \{s_1, \cdots, s_N\}$ represents the set of spin sites and $E$ represents the pairwise interactions.
At step $i$ of the sampling process, the evaluation of the conditional partition functions $Q_i^{\pm}$ requires partitioning the vertex set $V$ into two disjoint subsets: the frozen set $F_i = \{s_1, \cdots, s_i\}$ (which includes the spin $s_i$ currently being sampled, whose two candidate values distinguish $Q_i^+$ from $Q_i^-$), and the remaining unsampled set $R_i = \{s_{i+1}, \cdots, s_N\}$, such that $V = F_i \cup R_i$.

The auxiliary spin construction maps the effective Hamiltonian of the unsampled spins onto an extended graph $G'_i = (V'_i, E'_i)$.
Topologically, this construction replaces the entire frozen region $F_i$ with a single auxiliary vertex $s_g$.
The new vertex set is $V'_i = R_i \cup \{s_g\}$.
The new edge set $E'_i$ retains all original edges internal to $R_i$ and replaces every boundary connection between $F_i$ and $R_i$ with a new edge connected to $s_g$.
To ensure the Kac--Ward formula remains applicable, $G'_i$ must be planar for all $i$.
We establish this through the following theorem.

\begin{definition}[Connected Search Ordering]
An autoregressive sampling ordering on a connected graph is called a connected search ordering if, for every step $i \in \{2, \cdots, N\}$, the subgraph induced by the frozen vertices, denoted as $G[F_i]$, is a connected graph.
\end{definition}

Note that standard graph traversal algorithms, such as breadth-first search or depth-first search starting from an arbitrary initial node, strictly generate connected search orderings on connected graphs.
On the open square lattice, the raster-scan ordering adopted in this work is also a connected search ordering: every prefix of the raster order consists of several complete rows plus an initial segment of the next row, and therefore induces a connected subgraph.

\begin{theorem}
Let $G = (V, E)$ be a connected planar graph.
If the frozen sets $F_i$ are generated via a connected search ordering, then the extended graph $G'_i$ constructed at any step $i$ is guaranteed to be planar.
\end{theorem}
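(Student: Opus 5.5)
The plan is to show that $G'_i$ is a minor of $G$, obtained by contracting the connected subgraph $G[F_i]$ to a single vertex. We then invoke the standard fact that planarity is closed under taking minors, in particular under edge contraction. Concretely, since $G[F_i]$ is connected by the connected-search-ordering hypothesis (for $i=1$ it is a single vertex, so this is trivial), we fix a spanning tree $T_i$ of $G[F_i]$. We then contract the $|F_i|-1$ edges of $T_i$ one at a time. Each contraction merges two vertices of the current frozen cluster into one, so after all contractions the frozen set becomes a single vertex, which we identify with $s_g$.

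Next we check that the resulting graph coincides with $G'_i$ after simplification. Edges internal to $R_i$ are untouched by the contractions. Each edge $(f,m)$ with $f\in F_i$, $m\in R_i$ becomes an edge $(s_g,m)$. Edges of $G[F_i]$ not in $T_i$ become loops at $s_g$, and these are deleted. Several boundary edges from different frozen vertices to the same $m$ become parallel edges $(s_g,m)$, and these are merged into one. This merging matches the construction in Sec.~\ref{subsec:eval-cond}, where the field $h_m^{\pm}=\sum_{j} J_{jm}s_j$ aggregates all such contributions into a single coupling $J_{gm}^{\pm}$. Deleting loops and merging parallel edges preserve planarity trivially, since we are deleting edges of an embedded multigraph.

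The key step is to justify rigorously that contracting an edge of a planar graph preserves planarity. We will do this at the level of embeddings rather than by citing Kuratowski/Wagner, since the Kac--Ward formula needs an actual plane embedding. Given a plane embedding of $G$ and an edge $e=(u,v)$, we shrink $e$ along its own curve: we take a thin neighborhood of the arc $e$ that meets no other edge except at the ends incident to $u$ and $v$. Inside it we redraw every edge incident to $v$ so that it continues along $e$ to $u$. This produces a crossing-free embedding of $G/e$. Iterating this over the edges of $T_i$ yields a plane embedding of $G'_i$.

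The main obstacle is a technical point: Sec.~\ref{subsec:kacward} assumes straight-line edges. Contraction produces curved edges in general. We resolve this by invoking Fáry's theorem, which states that every simple planar graph admits a straight-line embedding; it applies because we have already removed loops and multi-edges. Alternatively, one can note that the Kac--Ward formula depends only on turning angles and so extends to piecewise-linear embeddings. Connectedness of $G[F_i]$ is essential precisely here: without it, the frozen set would have to be identified with $s_g$ across several components. Identifying non-adjacent vertices is not a minor operation and can destroy planarity. This is why the hypothesis enters the proof.
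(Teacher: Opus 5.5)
Your proposal is correct and follows the paper's own argument: contract a spanning tree of the connected subgraph $G[F_i]$ into the single vertex $s_g$, discard the non-tree frozen edges and merge the resulting parallel edges, and conclude that $G'_i$ is a minor of $G$ and hence planar. Your embedding-level justification that contraction preserves planarity, and your appeal to F\'ary's theorem to recover a straight-line drawing, are extras the paper does not spell out, but they do not change the route.
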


\begin{proof}
The proof relies on the theory of graph minors.
A graph $H$ is a minor of a graph $G$ if $H$ can be formed from $G$ by deleting edges, deleting vertices, and contracting edges.
A fundamental property of planar graphs is that they are minor-closed, i.e., any minor of a planar graph is necessarily planar.
Assume $F_i$ is generated via a connected search ordering, implying that the induced subgraph $G[F_i]$ is connected.
Therefore, $G[F_i]$ contains at least one spanning tree $T$.
Thus we can construct the extended graph $G'_i$ from the original planar graph $G$ using the following sequence of minor-producing operations:
\begin{enumerate}
    \item Edge deletion: Delete all edges in $G[F_i]$ that do not belong to the spanning tree $T$. The resulting graph remains planar.
    \item Edge contraction: Sequentially contract every edge belonging to the spanning tree $T$. This process will merge the entire subset $F_i$ into a single, unified vertex $s_g$.
    \item Parallel edge deletion: If several sites of $F_i$ are adjacent to the same boundary site $s_m \in R_i$, the contraction produces parallel edges between $s_g$ and $s_m$. Merging these parallel edges into a single edge amounts to a further edge deletion, ensuring the resulting simple graph remains a minor of $G$, see the edge between $s_9$ and $s_g$ in Fig.~\ref{fig:illustration}.
\end{enumerate}

Therefore, the graph resulting from these operations is exactly the extended graph $G'_i = (V'_i, E'_i)$ defined by the auxiliary spin construction.
Because $G'_i$ is derived from the planar graph $G$ exclusively through minor-producing operations, $G'_i$ is a minor of $G$.
Since the class of planar graphs is closed under minors, and $G$ is planar, it follows immediately that $G'_i$ must be planar.
This holds for any step $i$.
\end{proof}

\section{Computational complexity analysis}\label{appendix:complexity}
As shown in Sec.~\ref{subsec:det-ratio}, each autoregressive step requires one sparse LU factorization of the matrix $A = I - BD^{-}$ to evaluate the determinant ratio via the matrix determinant lemma.
Summing the per-step factorization cost over all $N$ steps yields the total sampling complexity.
Let $M = N - i + 1$ denote the number of remaining spins at step~$i$, so that the dimension of $A$ scales as $\mathcal{O}(M)$.
Each sparse LU factorization costs $\mathcal{O}(M^{3/2})$ for a planar graph.
The total cost is dominated by the early steps where $M \sim N$, giving $\mathcal{O}(N^{5/2})$ overall, as shown in Fig.~\ref{fig:appendix}(a).

One might hope to reduce the per-step cost by replacing the direct solver with iterative Krylov methods, thereby exploiting the $\mathcal{O}(N)$ sparsity of $A$.
However, in practice we find that $A = I - BD^{-}$ is ill-conditioned for frustrated systems at low temperatures.
As shown in Fig.~\ref{fig:appendix}(b), its condition number remains between $1$ and $10^{2}$ for non-frustrated models (the Ising model and the triangular FM Ising model).
The condition number diverges exponentially under quenched disorder (the EA spin glass) or geometric frustration (the triangular AFM Ising model).
The algorithm must therefore employ arbitrary-precision arithmetic to maintain numerical stability.
A similar issue is encountered in Ref.~\cite{thomas2009exact}.

\begin{figure}[!ht]
    \centering
    \includegraphics[width=\linewidth]{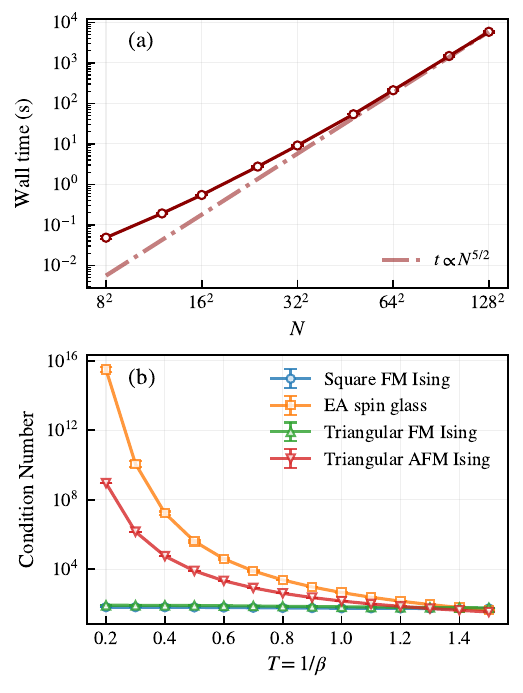}
    \caption{(a) Wall-clock time versus the number of spins $N$ for sampling one full spin configuration of the square FM Ising model.
    (b) Condition number of the matrix $A=I - B D^{-}$ as a function of temperature for sampling the first spin on an $L=32$ lattice.
    While the square and triangular FM Ising models maintain condition numbers of $\mathcal{O}(1)$--$\mathcal{O}(10^{2})$ across all temperature regimes, frustrated systems such as the EA spin glass and the triangular AFM Ising model exhibit an exponential divergence as $T \to 0$.}
    \label{fig:appendix}
\end{figure}

\bibliography{refs}

\end{document}